\acrodef{RAC}{random access code}
\acrodef{ASP}{average success probability}
\acrodef{POVM}{positive operator-valued measure}
\acrodef{PVM}{projection-valued measure}
\acrodef{MUB}{mutually unbiased base}
\acrodef{SDP}{semidefinite programming}
\newcommand{\ket}[1]{\vert#1\rangle}
\newcommand{\braket}[2]{\langle#1\vert#2\rangle}
\newcommand{\ketbra}[2]{\vert#1\rangle\langle#2\vert}
\newtheorem{result}{Result}
\newtheorem{lemma}{Lemma}
\newtheorem{corollary}{Corollary}
\renewcommand{\Tr}[1]{\mathrm{tr}\left[#1\right]}
\begin{document}

\title{Simple and general bounds on quantum random access codes}

\author{M\'at\'e Farkas}
\thanks{mate.farkas@york.ac.uk}
\affiliation{Department of Mathematics, University of York, Heslington, York, YO10 5DD, United Kingdom}

\author{Nikolai Miklin}
\thanks{nikolai.miklin@tuhh.de}
\affiliation{Institute for Quantum-Inspired and Quantum Optimization, Hamburg University of Technology, Germany}

\author{Armin Tavakoli}
\thanks{armin.tavakoli@teorfys.lu.se}
\affiliation{Physics Department and NanoLund, Lund University, Box 118, 22100 Lund, Sweden.}

\maketitle

\begin{abstract}
Random access codes are a type of communication task that is widely used in quantum information science. The optimal average success probability that can be achieved through classical strategies is known for any random access code. However, only a few cases are solved exactly for quantum random access codes. In this paper, we provide bounds for the fully general setting of $n$ independent variables, each selected from a $d$-dimensional classical alphabet and encoded in a $D$-dimensional quantum system subject to an arbitrary quantum measurement. The bound recovers the exactly known special cases, and we demonstrate numerically that even though the bound is not tight overall, it can still yield a good approximation. 
\end{abstract}

\section{Introduction}
Quantum \acp{RAC} are a broadly useful tool in quantum information science. In addition to being studied on their own merit (see e.g.~\cite{Nayak1999, Hayashi2006, Pawlowski2010, Tavakoli2016, Liabotro2017, Doriguello2021, Miao2022, Silva2023}), an incomplete list of their broader relevance includes protocols for quantum contextuality \cite{Spekkens2009}, information-theoretic principles for quantum correlations \cite{Pawlowski2009}, tests of quantum dimension \cite{Brunner2013, Pauwels2021}, quantum cryptography \cite{Pawlowski2011}, famous open problems in Hilbert space geometry \cite{Aguilar2018} and certification of measurements \cite{Mironowicz2019, Smania2020, Carmeli2020} and instruments \cite{Mohan2019, Miklin2020}. This widespread use has led to quantum RACs being the focus of many experiments, see e.g.~\cite{Spekkens2009, Muhammad2014, Tavakoli2015, Aguilar2018, Foletto2020, Anwer2020, Xiao2021}. To prove and maximize the utility of RACs in most of these tasks, it is essential to find optimal quantum RAC strategies, or at least to find relatively tight bounds on the optimal performance. This is because  a tight upper bound is necessary e.g.~in order to use quantum RACs for certification \cite{Tavakoli2018,Farkas2019}, whereas approximate bounds can lead to applications in e.g.~quantum key distribution \cite{Pawlowski2011,WP15}. Finding such universal bounds is precisely the aim of this work.

Consider a communication scenario in which a sender encodes private data into a message that is sent to a receiver who wants to recover some freely chosen part of the original data set. \acp{RAC} are a particularly natural class of such tasks. In a \ac{RAC}, the private data can consist of $n$ independent and uniformly distributed classical variables, $x\coloneqq \{x_1,x_2,\ldots,x_n\}$. Each variable is selected from an alphabet with $d$ distinct symbols, $x_i\in[d\,]\coloneqq \{1,2,\ldots,d\}$ for $i=1,2,\ldots,n$. The data set $x$ is then encoded by the sender, Alice, into a typically much smaller message whose dimension is $D$. The message is sent to the receiver, Bob, who privately selects at random which element in the data set he wishes to recover, labeled by $y\in[n]$, and outputs $b\in[d\,]$ as his guess for the value of the random variable $x_y$. The \ac{ASP} of the \ac{RAC} is hence given by
\begin{equation}
\mathcal{P}_{n,d,D}\coloneqq \frac{1}{nd^n}\sum_{x\in[d\,]^n}\sum_{y=1}^n \mathbb{P}[b=x_y|x,y],
\end{equation}
where, following a common short-hand notation, we used $x,y$, and $x_y$ as labels of the corresponding random variables' outcomes.

In a classical \ac{RAC}, the message is represented by a random variable with $D$ possible integer values. The encoding is then represented by any function $E:[d\,]^n\rightarrow [D\,]$ and, similarly, the decoding consists of a set of functions $D_y:[D\,]\rightarrow [d\,]$. The case in which the message dimension equals the size of the alphabet, i.e.,~when $D=d$, has been the most studied case so far. The optimal \ac{ASP} for such protocols was conjectured in \cite{Tavakoli2015} and later proven in \cite{ambainis2015optimal}. In the general case, namely when $D\neq d$, the optimal \ac{ASP} was recently proven in \cite{Saha2023}. 

In a quantum \ac{RAC}, the sender encodes the classical data $x$ into a quantum state $\rho_x$ whose Hilbert space dimension is $D$. The receiver's decoding corresponds to a set of quantum measurements $(M_{b|y})_{b\in[d\,]}$, for $y\in[n]$, where $b$ denotes the outcome and $y$ denotes the setting. For each $y$, any $D$-dimensional \ac{POVM} is a valid decoding operation. Hence, in the quantum \ac{RAC}, the optimal \ac{ASP} becomes
\begin{equation}\label{ASP}
\mathcal{P}^Q_{n,d,D} \coloneqq \max_{\rho_x,M_{b|y}}   \frac{1}{nd^n}\sum_{x\in[d\,]^n}\sum_{y=1}^n \Tr{\rho_x M_{x_y|y}}.
\end{equation}

Finding the optimal quantum \ac{ASP} is, in general, not easy. However, one family of exact results is known. This pertains to the case of $n=2$ and $D=d$. It was conjectured in \cite{Tavakoli2015} and later proven in \cite{Farkas2019} that 
\begin{equation}\label{n2}
	\mathcal{P}^Q_{2,d,d}=\frac{1}{2}\left(1+\frac{1}{\sqrt{d}}\right).
\end{equation}
This can be achieved by selecting the two decoding measurements as the computational and Fourier bases measurements. The encodings are obtained via the Weyl-Heisenberg group generators as $X^{x_1}Z^{x_2}\ket{\psi}$, where $X=\sum_{k=0}^{d-1} \ketbra{k+1}{k}$ and $Z=\sum_{k=0}^{d-1}e^{\frac{2\pi \mathrm{i}}{d}k}\ketbra{k}{k}$, with $\ket{\psi}$ being a uniform superposition of $\ket{0}$ and its Fourier transform \cite{Tavakoli2015}. This protocol is also unique, in a weaker self-testing sense: any pair of \acp{MUB} leads to an optimal strategy, and not all of these are unitarily equivalent to the computational and Fourier bases~\cite{Farkas2019}.

One more exact result is known, namely when Alice has three bits and communicates a qubit to Bob. In that case, one has
\begin{equation}\label{case2}
	\mathcal{P}^Q_{3,2,2}=\frac{1}{2}\left(1+\frac{1}{\sqrt{3}}\right).
\end{equation}
The optimal  quantum protocol consists in measuring the three Pauli observables and preparing eight qubit states on the Bloch sphere so that they form a cube \cite{Ambainis1999, Ambainis2002}. The protocol is  known to be unique in a self-testing sense~\cite{Tavakoli2018}.

Beyond these exact results, a generic bound on the optimal quantum ASP when $d=D=2$ is known~\cite{ambainis2009quantum} to be  
\begin{equation}\label{d2}
	\mathcal{P}^Q_{n,2,2} \leq \frac{1}{2}\left(1+\frac{1}{\sqrt{n}}\right).
\end{equation} 
This was later generalized in~\cite{Vicente2019} to a bound valid for arbitrary tuples $(n,d,D)$, which reads
\begin{equation}\label{julio}
	\mathcal{P}^Q_{n,d,D}\leq \frac{1}{d}+\frac{\sqrt{d D}-1}{d \sqrt{n}}.
\end{equation}
Eq.~\eqref{julio} reduces to Eq.~\eqref{d2} when $d=D=2$ and therefore subsumes that bound. Note that it does not recover the exact family of results in Eq.~\eqref{n2}. Other known bounds for quantum RACs rely mainly on relaxation methods based on  \ac{SDP} \cite{tavakoli2023semidefinite}. These are commonly applied to specific and typically small-scale tuples $(n,d,D)$, see e.g.,~\cite{Navascues2015, Rosset2019, Pauwels2022} for methods and case studies. It is apparent from these studies that \ac{SDP} techniques scale rather badly with the problem size. It is therefore highly desirable to develop new analytic tools for bounding the quantum \ac{ASP} for generic tuples $(n,d,D)$, which is precisely the aim of our work.

In this paper, we present two simple analytical upper bounds on the quantum \ac{ASP} which apply to the most general setting, namely any tuple $(n,d,D)$. Our final, combined, bound is given in Corollary~\ref{Cor1}. It is distinct from the known general bound in Eq.~\eqref{julio}. Notably, it recovers the both exact results in Eq.~\eqref{n2} and Eq.~\eqref{case2} as special cases, as well as the bound in Eq.~\eqref{d2}. Since our bound is typically not tight, we study its performance through numerical case studies. We demonstrate with examples that it can be a good approximation of the exact value.

\section{Bounds on quantum average success probability}

Operator norm inequalities have proven to be a useful analytic tool for bounding the quantum \ac{ASP} of \acp{RAC} \cite{Farkas2019}. In order to extend these techniques beyond two measurement settings, we employ the following lemma, which turns out to be highly useful in various ways for obtaining improved analytic bounds for more than two measurement settings.
\begin{lemma}\label{Lemma}
	Let $A$ be a trace-zero Hermitian matrix. Then it holds that
	\begin{equation}\label{lemma}
	\norm{A}_\infty \leq\sqrt{\frac{r-1}{r}} \norm{A}_F\ ,
	\end{equation}
	where $r$ is the rank of $A$, $\norm{\cdot}_\infty$ and $\norm{\cdot}_F$ denote the operator and the Frobenius norms, respectively. 
\end{lemma}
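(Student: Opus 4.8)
The plan is to work entirely at the level of the spectrum of $A$. Since $A$ is Hermitian it is unitarily diagonalizable, and both norms appearing in the statement are unitarily invariant and depend only on the eigenvalues: if $\lambda_1,\dots,\lambda_r$ denote the nonzero eigenvalues (there are exactly $r$ of them, counted with multiplicity, because $A$ has rank $r$), then $\norm{A}_\infty=\max_i|\lambda_i|$ and $\norm{A}_F^2=\sum_{i=1}^r\lambda_i^2$. The trace-zero hypothesis becomes the single linear constraint $\sum_{i=1}^r\lambda_i=0$. The inequality to be proven is therefore a purely elementary statement about $r$ real numbers summing to zero, and the matrix structure plays no further role.

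First I would relabel the eigenvalues so that the one of largest magnitude is $\lambda_1$, whence $\norm{A}_\infty=|\lambda_1|$. The trace constraint then reads $\sum_{i=2}^r \lambda_i = -\lambda_1$. The key step is to bound $\lambda_1^2$ from above in terms of the remaining $r-1$ eigenvalues via the Cauchy--Schwarz inequality, applied to the vector $(\lambda_2,\dots,\lambda_r)$ paired with the all-ones vector:
\[
\lambda_1^2 = \Big(\sum_{i=2}^r \lambda_i\Big)^2 \leq (r-1)\sum_{i=2}^r \lambda_i^2 .
\]
Rearranging this as $\sum_{i=2}^r\lambda_i^2 \geq \lambda_1^2/(r-1)$ and adding $\lambda_1^2$ to both sides gives
\[
\norm{A}_F^2 = \lambda_1^2 + \sum_{i=2}^r \lambda_i^2 \geq \lambda_1^2 + \frac{\lambda_1^2}{r-1} = \frac{r}{r-1}\,\lambda_1^2 = \frac{r}{r-1}\,\norm{A}_\infty^2 ,
\]
and taking square roots yields the claimed bound \eqref{lemma}.

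I do not anticipate a genuine obstacle; the only points requiring care are bookkeeping ones. One must confirm that exactly $r$ eigenvalues are nonzero, so that the Cauchy--Schwarz step is carried out over the correct number $r-1$ of terms, and one should record that the bound is tight: equality in Cauchy--Schwarz holds precisely when $\lambda_2=\dots=\lambda_r$, that is, when the spectrum consists of a single eigenvalue $\lambda_1$ together with $r-1$ equal eigenvalues $-\lambda_1/(r-1)$. This is exactly the extremal configuration saturating \eqref{lemma}, and its tightness is what makes the lemma effective when it is later applied to the operators arising in the random access code.
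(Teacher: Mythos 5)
Your proof is correct and follows essentially the same route as the paper: both reduce to the spectrum, isolate the largest-magnitude eigenvalue, and bound $\lambda_1^2=\bigl(\sum_{i=2}^r\lambda_i\bigr)^2\leq (r-1)\sum_{i=2}^r\lambda_i^2$ — your direct Cauchy--Schwarz application is exactly the paper's step phrased there as the norm relation $\norm{\cdot}_F\geq\frac{1}{\sqrt{t}}\norm{\cdot}_1$ on the tail of the spectrum followed by dropping absolute values. Your added remark on the extremal spectrum saturating the bound is a nice bonus but not part of the paper's argument.
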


\begin{proof}
Matrix $A$ has $r$ real eigenvalues which we denote as $\{\lambda_i\}_{i=1}^r$. Since $\Tr{A}=0$, we have that $\sum_{i=1}^r \lambda_i=0$. Let $\lambda_1$ be the largest eigenvalue in the absolute value, which implies that $\norm{A}_\infty =\abs{\lambda_1}$. The Frobenius norm can then be lower-bounded as
\begin{equation}
\begin{split}
\norm{A}_F^2 & \left. =\sum_{i=1}^r \lambda_i^2\geq \lambda_1^2 +\frac{1}{r-1} \left(\sum_{i=2}^r \abs{\lambda_i}\right)^2 \geq \lambda_1^2 +\frac{1}{r-1} \left(\sum_{i=2}^r \lambda_i\right)^2 = \lambda_1^2 +\frac{1}{r-1} \left(-\lambda_1\right)^2 \right. \\
& \left. =\frac{r}{r-1}\norm{A}_\infty^2.
\right.
\end{split}
\end{equation}
In the first inequality we used the well-known relation $\norm{\cdot}_F\geq \frac{1}{\sqrt{t}} \norm{\cdot}_1$, where $ \norm{\cdot}_1$ is the trace norm and $t$ is the dimension of the relevant operator. In the second inequality we discarded the absolute values, and in the next step we used the trace-zero condition.  Re-arranging the left- and right-hand side returns Eq.~\eqref{lemma}.
\end{proof}

We now state and prove our first bound on the quantum \ac{ASP}. 
\begin{result}\label{Res1}
	The average success probability of the quantum random access code, in the setting of $n$-element data set with alphabet size $d$ and message dimension $D$, is bounded as
	\begin{equation}\label{res1}
\mathcal{P}^Q_{n,d,D} \leq \frac{1}{d}+\frac{D-1}{\sqrt{ndD}}\ .
	\end{equation}	
\end{result}
\begin{proof}
We can trivially re-write the quantum \ac{ASP} in Eq.~\eqref{ASP} as
\begin{align}\label{step}
\mathcal{P}^Q_{n,d,D}=  \frac{1}{nd^n}\sum_x \sum_y\frac{1}{D}\Tr{M_{x_y|y}}+\frac{1}{nd^n}\sum_{x} \Tr{\rho_x \sum_y \left(M_{x_y|y}-\frac{\openone}{D}\Tr{M_{x_y|y}}\right)}\ , 
\end{align}
where we omitted writing the maximization and limits of the sums for convenience. In Eq.~\eqref{step}, $\openone$ is the identity operator on the $D$-dimensional Hilbert space in which $\rho_x$ and $M_{b|y}$ are defined.
 From the normalization, $\sum_{b=1}^{d}M_{b|y}=\openone$, it follows that the first term evaluates to $\frac{1}{d}$. Clearly, the optimal choice of $\rho_x$ corresponds to the eigenvector with the largest eigenvalue of the Hermitian operator $O_x\coloneqq \sum_y \left(M_{x_y|y}-\frac{\openone}{D}\Tr{M_{x_y|y}}\right)$. This gives
\begin{equation}
\mathcal{P}^Q_{n,d,D}=\frac{1}{d}+\frac{1}{nd^n}\sum_{x} \norm{O_x}_\infty\ .
\end{equation}
Notice that by adopting the form~\eqref{step}, we have conveniently ensured that $\Tr{O_x}=0$. Hence, we can now apply Lemma~\ref{Lemma} to obtain 
\begin{equation}\label{step3}
\mathcal{P}^Q_{n,d,D} \leq  \frac{1}{d}+ \frac{1}{n d^n}\sqrt{\frac{D-1}{D}}\sum_x \norm{O_x}_F\leq \frac{1}{d}+ \frac{\sqrt{D-1}}{n \sqrt{Dd^{n}}} \sqrt{\sum_x \Tr{O_x^2}}\ ,
\end{equation}
where in the second step we used the concavity inequality $\frac{1}{N} \sum_{i=1}^N \sqrt{t_i}\leq \sqrt{\frac{1}{N}\sum_{i=1}^N t_i}$, which holds for any $t_i\geq 0$, $i\in [N]$. Simplifying the expression under the square-root gives
\begin{equation}\label{step2}
\sum_x \Tr{O_x^2}=\sum_x \sum_{y,z=1}^n \Tr{M_{x_y|y}M_{x_z|z}}-\frac{1}{D}\sum_x \sum_{y,z=1}^n \Tr{M_{x_y|y}}\Tr{M_{x_z|z}}.
\end{equation}
The two terms on the right-hand-side of Eq.~\eqref{step2} simplify, respectively, as
\begin{align}\begin{split}
\sum_x \sum_{y,z=1}^n \Tr{M_{x_y|y}M_{x_z|z}}&=\sum_{y}\sum_x \Tr{M_{x_y|y}^2} +\sum_{y\neq z} \sum_{x\setminus\{x_y,x_z\}}\sum_{x_y,x_z}\Tr{M_{x_y|y}M_{x_z|z}} \\
&= \sum_{y}\sum_x \Tr{M_{x_y|y}^2} +n(n-1)d^{n-2}D\ ,
\end{split}\end{align}
and
\begin{align}\begin{split}
\frac{1}{D}\sum_x \sum_{y,z=1}^n \Tr{M_{x_y|y}}\Tr{M_{x_z|z}}&=\frac{1}{D}\sum_{y}\sum_{x} \Tr{M_{x_y|y}}^2 \\
& +\frac{1}{D} \sum_{y\neq z}\sum_{x\setminus \{x_y,x_z\}}\sum_{x_y} \Tr{M_{x_y|y}}\sum_{x_z}\Tr{M_{x_z|z}}\\
&\geq \frac{1}{D}\sum_{y}\sum_{x} \Tr{M_{x_y|y}^2}+n(n-1)d^{n-2}D\ ,
\end{split}\end{align}
where in the last inequality we used that for $M\geq 0$, we have $\Tr{M^2} \leq \Tr{M}^2$. 
In the above two equations, $\sum_{y \neq z}$ denotes the summation over $y\in [n]$ and $z\in [n]$ such that $y \neq z$.
Substituting this back into Eq.~\eqref{step2}, we obtain
\begin{equation}\label{step2_2}
\sum_x \Tr{O_x^2}\leq  \frac{D-1}{D}\sum_y \sum_{x}\Tr{M_{x_y|y}^2}\leq \frac{D-1}{D}\sum_y \sum_{x\setminus\{x_y\}} \sum_{x_y}\Tr{M_{x_y|y}}=(D-1)nd^{n-1}\ ,
\end{equation}
where we used that for an operator $0 \le M \le \openone$, we have $\Tr{M^2} \leq \Tr{M}$. Finally, substituting this back into Eq.~\eqref{step3} returns the result in Eq.~\eqref{res1}.
\end{proof}
Notice that when $d=D=2$, we recover the known bound \eqref{d2}. Also, when $d=D$, we recover the previously known bound in Eq.~\eqref{julio}. However, for $d=D$ and $n=2$ the above upper bound coincides with the other known bound \eqref{n2} only for $d=2$, and otherwise provides a weaker bound. 

We now state and prove the second main result, which is a different general bound on the quantum \ac{ASP}. This time, it is more relevant for the scaling in $d$, as it recovers the bound in Eq.~\eqref{n2} as a special case. In this sense, it is  complementary to Result~\ref{Res1}.

\begin{result}\label{Res2}
The average success probability of the quantum random access code, in the setting of $n$-element data set with alphabet size $d$ and message dimension $D$, is bounded as
	\begin{equation}\label{res2}
	\mathcal{P}^Q_{n,d,D} \leq \frac{1}{n}\left(1+(n-1)\frac{\sqrt{D}}{d}\right) .
	\end{equation}	
\end{result}

\begin{proof}
Our main tool is an operator inequality proved by Popovici and Sebesty\'en \cite{Popovici2006}. For a set of $n$ positive semidefinite matrices $\{A_k\}_{k=1}^n$, it holds that
\begin{equation}
\norm{\sum_{k=1}^n A_k}_\infty \leq \norm{\Gamma}_\infty\ ,
\end{equation}
where $\Gamma$ is an $n\times n$ matrix with elements $\Gamma_{i,j}\coloneqq\norm{\sqrt{A_i}\sqrt{A_j}}_\infty$. Applying this to the quantum \ac{ASP} we get
\begin{align}\label{step4}
\mathcal{P}^Q_{n,d,D} =\frac{1}{n d^n}\sum_x \norm{\sum_y M_{x_y|y}}_\infty\leq \frac{1}{n d^n}\sum_x \norm{\Gamma^{(x)}}_\infty,
\end{align}
where $\Gamma^{(x)}_{y,z}=\norm{\sqrt{M_{x_y|y}}\sqrt{M_{x_z|z}}}_\infty$, for $y,z\in [n]$. Let us now fix the computational basis $\{ \ket{y} \}_{y=1}^n$ on $\mathbb{C}^n$, and separate the diagonal and off-diagonal parts of matrices $\Gamma^{(x)}$, 
\begin{equation}
\Gamma^{(x)}=\sum_{y} \norm{M_{x_y|y}}_\infty\ketbra{y}{y}+\sum_{z\neq y} \norm{\sqrt{M_{x_y|y}}\sqrt{M_{x_z|z}}}_\infty\ketbra{y}{z},
\end{equation}
where, as before, $\sum_{y \neq z}$ is the summation over $y\in [n]$ and $z\in [n]$ such that $y \neq z$. Applying the triangle inequality gives
\begin{equation}
\norm{\Gamma^{(x)}}_\infty \leq \norm{\sum_{y} \norm{M_{x_y|y}}_\infty\ketbra{y}{y}}_\infty +\norm{\sum_{y\neq z}\norm{\sqrt{M_{x_y|y}}\sqrt{M_{x_z|z}}}_\infty\ketbra{y}{z}}_\infty.
\end{equation}
The completeness condition of \acp{POVM} implies $\norm{M_{b|y}}_\infty\leq 1$ for all $b\in[d\,], y\in [n]$, which when applied to the first term above bounds it by $1$. 
Substituting the above into the right-hand-side of Eq.~\eqref{step4}, we arrive at
\begin{equation}
\mathcal{P}^Q_{n,d,D} \leq \frac{1}{n}+\frac{1}{n d^n}\sum_x\norm{\sum_{y\neq z}\norm{\sqrt{M_{x_y|y}}\sqrt{M_{x_z|z}}}_\infty\ketbra{y}{z}}_\infty.
\end{equation}
Notice now that the operators $\sum_{y\neq z}\norm{\sqrt{M_{x_y|y}}\sqrt{M_{x_z|z}}}_\infty\ketbra{y}{z}$ are Hermitian and trace-zero. Hence we can apply Lemma~\ref{Lemma} to bound the operator norm by the Frobenius norm. This gives
\begin{equation}
\begin{split}
\mathcal{P}^Q_{n,d,D} & \left. \leq \frac{1}{n}+\frac{\sqrt{n-1}}{n\sqrt{n}d^n}\sum_x\norm{\sum_{y\neq z}\norm{\sqrt{M_{x_y|y}}\sqrt{M_{x_z|z}}}_\infty\ketbra{y}{z}}_F \right. \\
& \left. =\frac{1}{n}+\frac{\sqrt{n-1}}{n\sqrt{n}d^n}\sum_x \sqrt{\sum_{y\neq z} \norm{\sqrt{M_{x_y|y}}\sqrt{M_{x_z|z}}}_\infty^2}.
\right. 
\end{split}
\end{equation}
Bounding the operator norm by the Frobenius norm again, but this time without the trace-zero condition, and therefore without the rank pre-factor, and then using the concavity of the square-root function as well as the completeness condition, we get
\begin{align}
\begin{split}
\mathcal{P}^Q_{n,d,D} & \leq \frac{1}{n}+\frac{\sqrt{n-1}}{n\sqrt{n}d^n}\sum_x \sqrt{\sum_{y\neq z} \norm{\sqrt{M_{x_y|y}}\sqrt{M_{x_z|z}}}_F^2} \\
& = \frac{1}{n}+\frac{\sqrt{n-1}}{n\sqrt{n} d^n} \sum_x \sqrt{\sum_{y\neq z} \Tr{M_{x_y|y}M_{x_z|z}}} \\
& \leq \frac{1}{n}+\frac{\sqrt{n-1}}{n\sqrt{n}\sqrt{d^n}}\sqrt{\sum_{y\neq z} \sum_{x\setminus\{x_y,x_z\}}\sum_{x_y}\sum_{x_z} \Tr{M_{x_y|y}M_{x_z|z}}} = \frac{1}{n}+(n-1)\frac{\sqrt{D}}{nd}\ ,
\end{split}
\end{align}
which is the final result.
\end{proof}

Combining the bounds in Result~\ref{Res1} and Result~\ref{Res2}, the our final bound on the quantum ASP becomes simply the smallest of the two, which is summarized by the following Corollary. 
\begin{corollary}\label{Cor1}
The average success probability of the quantum random access code, in the setting of $n$-element data set with alphabet size $d$ and message dimension $D$, is bounded as
\begin{equation}\label{res3}
\mathcal{P}^Q_{n,d,D} \leq  \min\left\{\frac{1}{d}+\frac{D-1}{\sqrt{ndD}},  \frac{1}{n}\left(1+(n-1)\frac{\sqrt{D}}{d}\right)\right\}.
\end{equation}
In particular, for the case $D=d=n$, the two expressions are identical. 
For the case $D=d$ and $n \geq d$, we have that
\begin{equation}
\mathcal{P}^Q_{n,d,d} \le \frac1d\left( 1 + \frac{d-1}{\sqrt{n}} \right),
\end{equation}
which corresponds to the first bound in Eq.~\eqref{res3}, and for $D=d$ and $n \leq d$, we have that
\begin{equation}\label{eq:n_le_d}
\mathcal{P}^Q_{n,d,d} \le \frac1n\left( 1 + \frac{n-1}{\sqrt{d}} \right),
\end{equation}
which corresponds to the second bound in Eq.~\eqref{res3}.
\end{corollary}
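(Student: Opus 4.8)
The plan is to treat Corollary~\ref{Cor1} as a direct bookkeeping consequence of Results~\ref{Res1} and~\ref{Res2}, since both bounds hold simultaneously for every tuple $(n,d,D)$. The first sentence of the corollary—that the combined bound is the minimum of the two—requires no work at all: if $\mathcal{P}^Q_{n,d,D}$ is bounded above by each of two quantities, it is bounded above by their minimum. So the real content is to verify the three concrete specializations at $D=d$ and to determine, as a function of the relative sizes of $n$ and $d$, which of the two expressions in Eq.~\eqref{res3} is the smaller one.

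First I would substitute $D=d$ into both expressions in Eq.~\eqref{res3}. The first bound becomes $\frac1d+\frac{d-1}{\sqrt{n}\,d}=\frac1d\bigl(1+\frac{d-1}{\sqrt n}\bigr)$, and the second becomes $\frac1n\bigl(1+(n-1)\frac{\sqrt d}{d}\bigr)=\frac1n\bigl(1+\frac{n-1}{\sqrt d}\bigr)$. This already matches the two displayed right-hand sides in the corollary, so the remaining task is purely to compare these two numbers. I would set up the inequality
\begin{equation}
\frac1d\left(1+\frac{d-1}{\sqrt n}\right)\ \lessgtr\ \frac1n\left(1+\frac{n-1}{\sqrt d}\right),
\end{equation}
clear denominators by multiplying through by $nd$, and simplify. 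The symmetric-looking structure suggests that after cross-multiplying one obtains an expression that factors nicely in terms of $n-d$ and $\sqrt n-\sqrt d$, so that the sign of the difference is controlled by the sign of $n-d$.

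The main obstacle—really the only nontrivial step—is carrying out this comparison cleanly and confirming that the crossover happens exactly at $n=d$, with the first bound smaller when $n\ge d$ and the second smaller when $n\le d$. I would do this by forming the difference of the two sides of the cleared inequality, i.e.\ $n\bigl(1+\frac{d-1}{\sqrt n}\bigr)-d\bigl(1+\frac{n-1}{\sqrt d}\bigr)=(n-d)+\bigl(\sqrt n(d-1)-\sqrt d(n-1)\bigr)$, and then showing the bracketed term can be rewritten as $-(\sqrt n-\sqrt d)(\sqrt{nd}-1)$ or a similarly signed product. Since $\sqrt{nd}\ge1$ always and $\sqrt n-\sqrt d$ has the same sign as $n-d$, one checks that the whole difference is nonnegative precisely when $n\ge d$, which pins down which bound is active in each regime. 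The special case $D=d=n$ is then immediate: both expressions collapse to $\frac1n\bigl(1+\frac{n-1}{\sqrt n}\bigr)=\frac1n+\frac{n-1}{n\sqrt n}$, confirming they coincide at the crossover point.
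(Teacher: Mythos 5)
Your overall strategy is the right one, and it is essentially all the proof this corollary requires: the paper offers no separate argument for it, since the minimum statement is immediate from Results~\ref{Res1} and~\ref{Res2}, and the $D=d$ specializations are pure algebra. Your substitution of $D=d$ into both expressions of Eq.~\eqref{res3} is correct, as is the observation that both collapse to $\frac1n\bigl(1+\frac{n-1}{\sqrt n}\bigr)$ when $D=d=n$.

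However, your sign analysis of the comparison is backwards, and the reasoning you sketch cannot settle it. Writing $a=\sqrt n$, $b=\sqrt d$, the bracketed term factors as
\begin{equation}
\sqrt n(d-1)-\sqrt d(n-1) = a(b^2-1)-b(a^2-1) = (b-a)(ab+1) = -\left(\sqrt n-\sqrt d\right)\left(\sqrt{nd}+1\right),
\end{equation}
so the second factor is $\sqrt{nd}+1$, not $\sqrt{nd}-1$ (a harmless slip, since both are positive). The real problem is the next step: when $n\ge d$ this bracketed term is $\le 0$ while $(n-d)\ge 0$, so knowing the signs of the two pieces separately---which is all your argument invokes---cannot determine the sign of their sum; a magnitude comparison is unavoidable. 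Carrying it out via $n-d=(\sqrt n-\sqrt d)(\sqrt n+\sqrt d)$ gives
\begin{equation}
n+\sqrt n(d-1) - d - \sqrt d(n-1) = \left(\sqrt n-\sqrt d\right)\left(\sqrt n+\sqrt d-\sqrt{nd}-1\right) = -\left(\sqrt n-\sqrt d\right)\left(\sqrt n-1\right)\left(\sqrt d-1\right),
\end{equation}
which is nonpositive---not nonnegative, as you claim---precisely when $n\ge d$. This is the statement you actually need: it says the first bound is the smaller one for $n\ge d$ and the second for $n\le d$, vanishing exactly at $n=d$, consistent with the corollary. As literally written, your conclusion (``nonnegative precisely when $n\ge d$'') would assign the two bounds to the opposite regimes, contradicting both the corollary and your own stated goal earlier in the proposal. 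The fix is the factorization above; everything else in your proposal stands.
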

When applied to the special cases of $(n,2,2)$ and $(2,d,d)$, the bound in Corollary~\ref{Cor1} reduces to the previously known bounds in Eq.~\eqref{d2} and Eq.~\eqref{n2}. The bound in Corollary~\ref{Cor1} and the other known generic bound, given in Eq.~\eqref{julio}, admit no strict hierarchy. That is, there exist tuples $(n,d,D)$ for which one of the bounds performs better than the other. For instance, when $D=d$ and $n\geq d$, the two bounds are identical, while for $n\leq d\leq D$, Corollary~\ref{Cor1} provides a tighter bound. This is also the case when $D<d$. On the other hand, for $D>d$ and a sufficiently large $n$, Eq.~\eqref{julio} is a tighter bound. In general, the best known bound can be easily determined for each particular case.

It is worth mentioning that Results~\ref{Res1} and \ref{Res2} can be extended to explicitly account for a noisy communication channel between Alice and Bob.
If we denote by $\Lambda$ the quantum channel used by the parties for communication, then a more accurate formula for the optimal \ac{ASP} becomes
\begin{equation}\label{ASP_channel}
\tilde{\mathcal{P}}^Q_{n,d,D} \coloneqq \max_{\rho_x,M_{b|y}} \frac{1}{nd^n}\sum_{x\in[d\,]^n}\sum_{y=1}^n \Tr{\Lambda(\rho_x) M_{x_y|y}}.
\end{equation}
Clearly, since $\Lambda(\rho_x)$ are again quantum states, the bounds from Corollary~\ref{Cor1} also apply for $\tilde{\mathcal{P}}^Q_{n,d,D}$ in Eq.~\eqref{ASP_channel}.
However, as recently shown in Ref.~\cite{Silva2023}, noisy communication channel can make the maximally attainable \ac{ASP} significantly lower. 

The simplest type of noisy communication channel to consider is the depolarizing channel, whose action on an operator $X$ can be described as $\Lambda(X) = (1-\eta)X + \eta\frac{\Tr{X} \openone}{D}$, where $\eta\in[0,1]$ is the depolarizing parameter. 
Plugging this formula into Eq.~\eqref{ASP_channel} lets us deduce straightforwardly that in this case $\tilde{\mathcal{P}}^Q_{n,d,D} = (1-\eta)\mathcal{P}^Q_{n,d,D}+\frac{\eta}{d}$, where $\mathcal{P}^Q_{n,d,D}$ is the optimal \ac{ASP} from Eq.~\eqref{ASP}.
Other types of noisy channels, e.g., dephasing channel, are less straightforward to account for, and, in our opinion, adapting Results~\ref{Res1} and \ref{Res2} to such cases deserves a separate study. 
However, we can lay out a general strategy that can be followed.
Since in our proofs we eliminate the states from the optimization and only optimize over the \ac{POVM} effects $M_{x_y\vert y}$, it is meaningful to move to the Heisenberg picture and consider the dual of the noise channel. That is, use the defining relation for the dual channel
\begin{equation}
\Tr{\Lambda(\rho_x) M_{x_y|y}} = \Tr{ \rho_x \Lambda^\dagger (M_{x_y|y} )},
\end{equation}
and optimize over \ac{POVM} elements under the fixed dual channel $\Lambda^\dagger$.
In this way, one can potentially use tighter bounds in the proofs of Results~\ref{Res1} and \ref{Res2}, depending on the nature of the map $\Lambda$.
As an example, in Eq.~\eqref{step2_2} we use the inequality $\Tr{M^2_{x_y\vert y}}\leq \Tr{M_{x_y\vert y}}$, which is only tight for projectors. The image of the dual of noisy channels are not projections in most cases. As such, given an explicit description of the channel, the above inequality can likely be tightened.

\section{Numerical case studies}

\begin{figure}[t!]
    \centering
    \includegraphics[width=.6\textwidth]{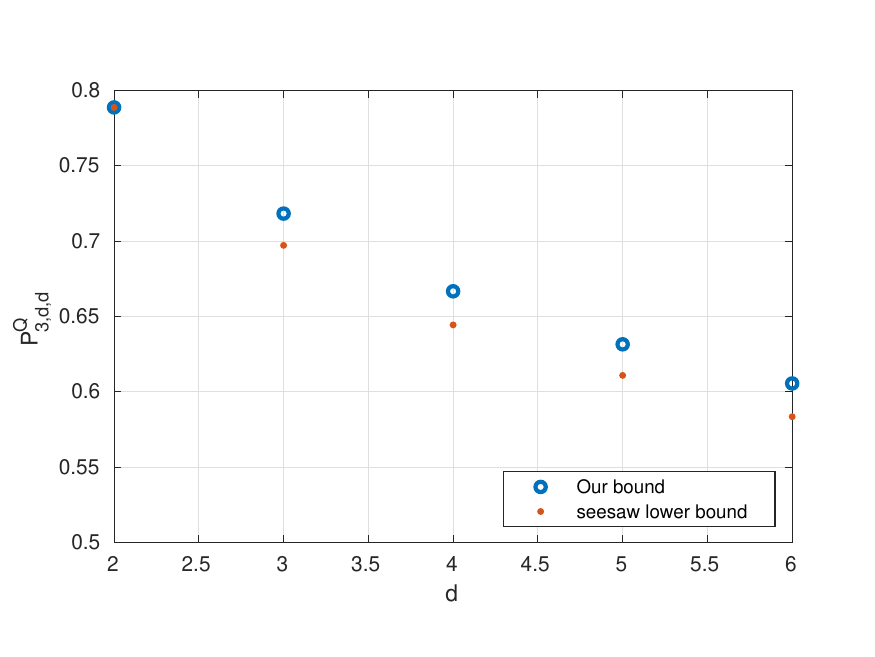}
    \caption{The upper bound from Corollary~\ref{Cor1} and a seesaw lower bound on \ac{ASP} for $n=3$ and $d=D$ with respect to the dimension $D$ of the quantum system.}\label{fig:3dd}
\end{figure}

In this section, we present numerical bounds on the quantum \ac{ASP} for some families of \acp{RAC} in order to benchmark our analytic bounds. We present both lower bounds to see the possible gap between our analytic results and the actual optimal \ac{ASP}, and upper bounds to prove that in some cases our bounds can be improved. Our codes generating the numerical data (apart from those taken from Ref.~\cite{Pauwels2022}) can be found in our open-access repository \cite{Miklin_github}.

We use the \emph{seesaw} \ac{SDP} algorithms to derive lower bounds on the quantum \ac{ASP}. Note that the \ac{ASP} \eqref{ASP} is linear in both $\rho_x$ and in $M_{b|y}$. Consequently, for a fixed \ac{POVM} operators $M_{b|y}$, optimizing the \ac{ASP} over the set of $\rho_x$ is an \ac{SDP}. Similarly, for a fixed set of $\rho_x$, optimizing the \ac{ASP} over $M_{b|y}$ is again an \ac{SDP}. The seesaw algorithm starts with a randomly selected set of $\rho_x$, and finds the optimal $M_{b|y}$ for that set as the solution of an \ac{SDP}. Then $M_{b|y}$ found in the first iteration are fixed, and the optimal $\rho_x$ for this set of \acp{POVM} are found via an \ac{SDP}. This process is iterated until the value of the \ac{ASP} converges to a locally optimal value up to some error threshold. This method is not guaranteed to find a global maximum, but every set of $\rho_x$ and $M_{b|y}$ found by this algorithm provide a valid lower bound on the \ac{ASP}.

The performance of the seesaw algorithm is highly dependent on the initialization, i.e., on the randomly selected $\rho_x$.
To obtain the lower bounds in this paper, in our numerical calculations we sample random pure states $\rho_x$ uniformly (with respect to the Haar measure) and repeat the algorithm a fixed number of times. Since there is no guarantee that the produced estimate is unbiased, one cannot determine a sufficient number of random initialization of the seesaw algorithm.
Nevertheless, by observing the distribution of the local optima produced by the seesaw algorithm for a number of random initializations, one can judge about optimality of the produced lower bounds. 

Since the case of $n=3$ and $d=D$ has been the focus of study for some time \cite{Aguilar2018} and this family is unsolved apart from $d=2$, in Fig.~\ref{fig:3dd} we compare lower bounds from seesaw techniques with our analytic bound from Corollary~\ref{Cor1}, which correspond to Eq.~\eqref{eq:n_le_d} because we have $n \le d$, for values $d \in \{3,4,5,6\}$. We also include the point for $d=2$, for which the seesaw algorithm finds the known optimal \ac{ASP}. This plot demonstrates that our bound is possibly not tight for these values of $d$, but the gap between the lower bound from the seesaw algorithm and our upper bound is seemingly not too large. That is, our upper bounds provide a reasonable approximation of the optimal quantum \ac{ASP} from above.

The smallest unsolved case of quantum \ac{RAC} is $n=d=D=3$, for which the best lower bound, both analytically and numerically, is given by performing measurements in three \acp{MUB}, and choosing the corresponding optimal states \cite{Aguilar2018,Farkas2017}. This gives the lower bound $\mathcal{P}^Q_{3,3,3} \gtrsim 0.6971$, and our analytic upper bound is $\mathcal{P}^Q_{3,3,3} \leq \frac12\left( 1 + \frac{2}{\sqrt{3}} \right) \approx 0.7182$, according to Eq.~\eqref{eq:n_le_d}. In order to test the tightness of our analytic upper bound, we numerically computed upper bounds on $\mathcal{P}^Q_{3,3,3}$ using \ac{SDP} techniques, in particular using the QDimSum package \cite{qdimsum} to implement the techniques described in \cite{Navascues2015,Rosset2019}. Note that this technique, as implemented in \cite{qdimsum}, assumes that the measurements used are projective. While there is no evidence that projective measurements are not optimal for the $n=d=D=3$ case, the upper bounds provided by this technique may be lower than the actual maximum.

Loosely speaking, the method~\cite{Navascues2015} relies on \emph{moment matrices} indexed by monomials of the operators $\rho_x$ and $M_{b|y}$. The more monomials are used for building the moment matrix, the tighter the upper bound, leading to a \emph{hierarchy} of upper bounds. On level 1 of the hierarchy, only order-1 monomials are used, that is, the monomials $\{ \openone \} \cup \{ \rho_x \}_x \cup \{ M_{b|y} \}_{b,y}$. On level 2, all order-2 monomials are used, that is, the monomials
\begin{equation}
\{ \openone \} \cup \{ \rho_x \}_x \cup \{ M_{b|y} \}_{b,y} \cup \{ \rho_x \rho_{x'} \}_{x,x'} \cup \{ M_{b|y} M_{b'|y'} \}_{b,y,b',y'} \cup \{ \rho_x M_{b|y} \}_{x,b,y}.
\end{equation}
One can also define ``intermediate'' levels, for example, the level ``$1 + \rho M$'', using the monomials
\begin{equation}
\{ \openone \} \cup \{ \rho_x \}_x \cup \{ M_{b|y} \}_{b,y} \cup \{ \rho_x M_{b|y} \}_{x,b,y}.
\end{equation}
Using this notation, the upper bounds we found for the various levels of the \ac{SDP} hierarchy for $\mathcal{P}^Q_{3,3,3}$ are given in Table~\ref{table:P333_upper}. 
\begin{table}[h!]
\centering
\begin{tabular}{|l|c|}
\hline
Level & Upper bound on $\mathcal{P}^Q_{3,3,3}$ \\
\hline
1 & 0.7182 \\
\hline
$1 + \rho M$ & 0.6989 \\
\hline
2 & 0.6989 \\
\hline
$2 + MMM$ & 0.6989 \\
\hline
$2 + \rho MM$ & 0.69855 \\
\hline
$2 + MMM + \rho MM$ & 0.69853 \\
\hline
\end{tabular}
\caption{Numerical upper bounds on $\mathcal{P}^Q_{3,3,3}$ using the QDimSum package~\cite{qdimsum} for different levels of the hierarchy.}\label{table:P333_upper}
\end{table}

Note that our analytic bound is recovered, up to the solver precision, for level $1$ of the hierarchy. Then, various consecutive levels yield the same upper bound, which, interestingly, corresponds to the hypothetical case of measuring in three orthonormal bases $\{ \ket{e_j} \}$, $\{ \ket{f_j} \}$ and $\{ \ket{g_j} \}$ on $\mathbb{C}^3$ that form a set of \acp{MUB} with the triple products
\begin{equation}
\braket{e_j}{f_k} \braket{f_k}{g_l} \braket{g_l}{e_j} + \braket{e_j}{g_l} \braket{g_l}{f_k} \braket{f_k}{e_j}
\end{equation}
being uniform, even though such bases are known not to exist~\cite{Farkas2017}.
Note that these triple products are naturally related to Bargmann invariants \cite{Chruciski2004}.
Adding certain order-$3$ monomials, such as $\rho MM$ and $MMM$, makes the upper bound tighter, with our current best numerical upper bound being $\mathcal{P}^Q_{3,3,3} \lesssim 0.69853$, which shows that there is still room for an improvement when deriving analytic upper bounds.

We also compare our analytical bounds to other recently developed \ac{SDP} techniques~\cite{Pauwels2022}. In particular, a special case of the bounds developed in \cite{Pauwels2022} corresponds to upper bounds on $\mathcal{P}^Q_{3,3,D}$. Importantly, these do not assume that the measurements are projective. 
In Figure \ref{fig:33D} we plot the seesaw lower bound, our analytical upper bound, and the upper bound from Ref.~\cite{Pauwels2022} on $\mathcal{P}^Q_{3,3,D}$ for $D \in \{2,3,\ldots, 10\}$. These numerical results show that our bound is not tight for this family of \acp{RAC} apart from the $D=2$ case, where our analytical upper bound is actually tighter than the one obtained in Ref.~\cite{Pauwels2022}.

\begin{figure}[h!]
	\centering
	\includegraphics[width=.6\textwidth]{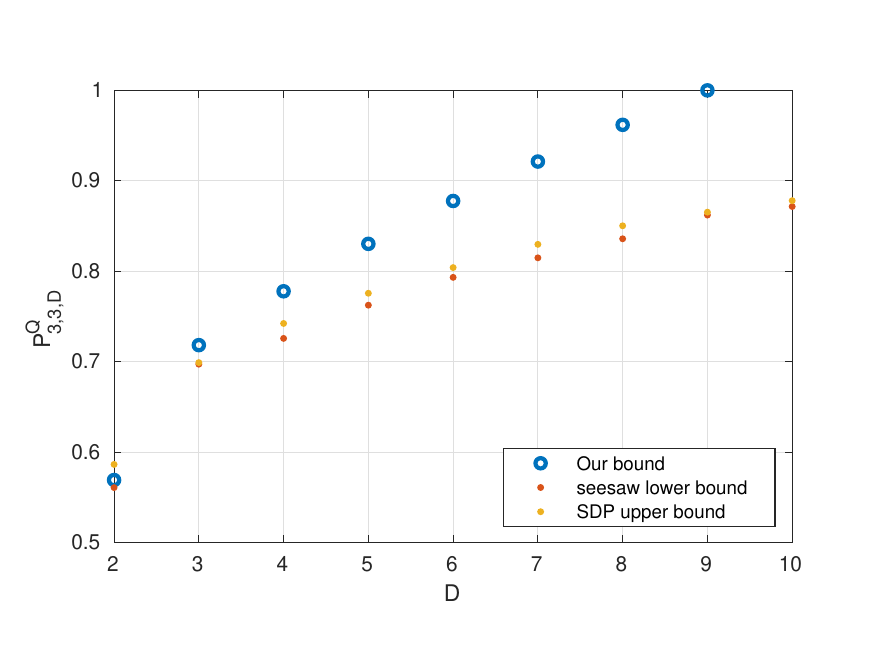}
	\caption{The upper bound from Corollary~\ref{Cor1}, the upper bound from the \ac{SDP} hierarchy of Ref.~\cite{Pauwels2022}, and a seesaw lower bound on \ac{ASP} for $n=3$, $d=3$ with respect to the dimension $D$ of the quantum system.}\label{fig:33D}
\end{figure}

Lastly, we compare our analytical upper bounds with seesaw lower bounds in the $n=3$ case for various different values for $d$ and $D$ in Table \ref{table:P3dD_seesaw}. While our bound is potentially not tight apart from the $d=D=2$ case, it provides a good approximation of the quantum optimal \ac{ASP} for generic values of $d$ and $D$.

\begin{table}[h!]
\centering
\begin{tabular}{|l|c|c|}
\hline
$(d,D)$ & Seesaw lower bound & Our upper bound \\
\hline
(2,2) & 0.78868 & 0.78868 \\ \hline
(2,3) & 0.80794$^\ast$  & 0.91068\\ \hline
(2,4) & 0.90825  & 1\\ \hline
(3,2) & 0.56066  & 0.56904\\ \hline
(3,3) & 0.69715  & 0.71823\\ \hline
(3,4) & 0.72567  & 0.77778\\ \hline
(3,5) & 0.76241  & 0.83024\\ \hline
(4,2) & 0.43697  & 0.45412\\ \hline
(4,3) & 0.47525  & 0.58333\\ \hline
(4,4) & 0.64434  & 0.66667\\ \hline
(4,5) & 0.66331$^\ast$  & 0.70601 \\ \hline
\end{tabular}
\caption{Lower bounds on $\mathcal{P}^Q_{3,d,D}$ for various values of $(d,D)$ from the seesaw method and the upper bound from Corollary~\ref{Cor1}. The presented estimates for the lower bound are the maximal obtained values of \ac{ASP} for $100$ runs of the seesaw algorithm with random initial states. $^\ast$The cases $(2,3)$ and $(4,5)$ appear to be special in a way that the seesaw algorithm often finds other local minima, which is not the case for other cases in this table.}\label{table:P3dD_seesaw}
\end{table}

\section{Conclusions}
In this paper, we derive a universal analytic upper bound on the average success probability of quantum random access codes. We consider the most general case of \acp{RAC} with $n$ independent variables from a $d$-dimensional alphabet encoded into a $D$-dimensional quantum system, for arbitrary $n,d$ and $D$. Our bounds recover known families of upper bounds for the case of $d=D$ and $n=2$, which is known to be tight \cite{Farkas2019}, and the $d=D=2$ case, which is known to be tight for $n=2$ and $3$~\cite{ambainis2009quantum}. In the general case, our bounds are not tight, although numerical evidence suggests that they provide reasonably good approximations of the actual quantum maximum. For the interesting case of $n=d=3$, we show numerically that our upper bounds can be improved, and that the question whether \acp{MUB} measurements are optimal for this case is still unresolved. We believe that our bounds, in combination with that obtained in~\cite{Vicente2019}, will be useful for the analytical treatment of problems where \acp{RAC} are used as a tool for benchmarking quantum information protocols.

Apart from benchmarking, approximate bounds on ASP of RACs are useful for analyzing semi-device-independent quantum key distribution (QKD) protocols \cite{Pawlowski2011}. As discussed in \cite{Pawlowski2011}, the security of some common QKD protocols, such as BB84 \cite{BB84}, cannot be guaranteed if the devices used for quantum communication are not characterized. To alleviate this issue, we can use protocols based on dimension witnesses---such as RACs---which provide security without a full characerization. For the case of RACs, the amount of distillable key only depends on the ASP and not on a full characterization of the devices~\cite{CsK78}.

Note that both of our bounds in Result \ref{Res1} and \ref{Res2} rely on pairwise properties of the measurements. In particular, both of these bounds at some point involve Hilbert-Schmidt inner products of measurement effects, and are, in principle, maximized when these inner products are uniform as both bounds use the concavity of the square root function. This would imply that if \acp{MUB} are indeed optimal for \acp{RAC}, then \emph{any} set of \acp{MUB} should give rise to the optimal \ac{ASP}, as all sets of \acp{MUB} satisfy the above trace uniformity condition. It is known, however, that not all sets of \acp{MUB} give rise to the same \ac{ASP} when $n=3$ \cite{Aguilar2018,Farkas2017}. This implies that either \acp{MUB} are not optimal for \acp{RAC} in the general case, or that new upper bounds relying on more than pairwise properties of measurements are needed to prove optimality.

A potential new avenue for obtaining improved analytic upper bounds is highlighted by the relatively strong performance of the numerical techniques in Ref.~\cite{Pauwels2022}. While these bounds are numerical, they are based on well-established \ac{SDP} hierarchy techniques. Therefore, analysing the dual \ac{SDP} hierarchy induced by that in Ref.~\cite{Pauwels2022} is a promising direction for deriving analytic bounds stronger than those presented in this work.

\begin{acknowledgements}
We thank Marcin Paw\l{}owski for the years of insightful discussions about the subject of this paper. We thank Julio de Vicente for a valuable feedback.
A.T.~is supported by the Knut and Alice Wallenberg Foundation through the Wallenberg Center for Quantum Technology (WACQT) and the Swedish Research Council under Contract No.~2023-03498.
This research was funded by the Deutsche Forschungsgemeinschaft (DFG, German Research Foundation), project number 441423094, and the Fujitsu Germany GmbH as part of the endowed professorship ``Quantum Inspired and Quantum Optimization''.
\end{acknowledgements}

\bibliographystyle{quantum}

\bibliography{references_qrac}

\end{document}